\newtheorem{theorem}{Theorem}[section]
\newtheorem{proposition}[theorem]{Proposition}
\newcommand{\beqa}{\begin{eqnarray*}}
\newcommand{\eeqa}{\end{eqnarray*}\par\noindent}
\newcommand{\rarr}{\rightarrow}
\newcommand{\HH}{\mathcal{H}}
\newcommand{\ket}[1]{{|} #1\rangle}
\newcommand{\ie}{\textit{i.e.}~}
\newcommand{\Set}{\mathbf{Set}}
\newcommand{\Complex}{\mathbb{C}}
\newcommand{\Two}{\mathbf{2}}
\newcommand{\Real}{\mathbb{R}}
\newcommand{\id}{\mathsf{id}}
\newcommand{\Pow}{\mathcal{P}}
\newcommand{\IFF}{\; \Longleftrightarrow \;}
\newcommand{\AND}{\; \wedge \;}
\newcommand{\OR}{\; \vee \;}
\newcommand{\card}[1]{|#1|}
\newcommand{\op}{\mathsf{op}}
\newcommand{\Pos}{\mathbf{Pos}}
\newcommand{\Rpos}{\Real_{\geq 0}}
\newcommand{\supp}{\mathsf{supp}}
\newcommand{\DR}{\mathcal{D}_{R}}
\newcommand{\VR}{\mathcal{V}_{R}}
\newcommand{\rmap}[2]{\rho^{#1}_{#2}}
\newcommand{\mxi}{X^{(i)}}
\newcommand{\mxj}{X^{(j)}}
\newcommand{\mxk}{X^{(k)}}
\newcommand{\myj}{Y^{(j)}}
\newcommand{\myk}{Y^{(k)}}
\newcommand{\myi}{Y^{(i)}}
\newcommand{\Att}{\mathcal{A}}
\newcommand{\Atpos}{\mathbf{Att}}
\newcommand{\Tup}[1]{\prod_{a \in #1} D_a}
\renewcommand{\mid}{\, : \,}
\newcommand{\TS}{\mathcal{T}}
\newcommand{\RS}{\mathcal{R}}
\newcommand{\nj}{\bowtie}
\begin{document}

\title{Relational Databases and Bell's Theorem}
\author{Samson Abramsky}
\maketitle

\begin{abstract}
Our aim in this paper is to point out a surprising formal connection, between two topics which seem on face value to have nothing to do with each other:
relational database theory, and the study of non-locality and contextuality in the foundations of quantum mechanics.
We shall show that there is a remarkably direct correspondence between central results such as Bell's theorem in the foundations of quantum mechanics, and questions which arise naturally and have been well-studied in relational database theory.

\end{abstract}

\section{Introduction}

Our aim in this paper is to point out a surprising formal connection, between two topics which seem on face value to have nothing to do with each other:
\begin{itemize}
\item Relational database theory.
\item The study of non-locality and contextuality in the foundations of quantum mechanics.
\end{itemize}

We shall show, using the unified treatment of the latter developed in \cite{abramsky2011unified}, that there is a remarkably direct correspondence between central results such as Bell's theorem in the foundations of quantum mechanics, and questions which arise naturally and have been well-studied in relational database theory.

In particular, we shall see that the question of whether an ``empirical model'', of the kind which can be obtained by making observations of measurements performed on a physical system, admits a classical physical explanation in terms of a local hidden variable model, is mathematically equivalent to the question of whether a database instance admits a universal relation. The content of Bell's theorem and related results is that there are empirical models, predicted by quantum mechanics and confirmed by experiment, which do not admit such a universal relation. Moreover, while the original formulation of Bell's theorem involved probabilities, there are ``probability-free'' versions, notably Hardy's construction, which correspond directly to relational databases.

In fact, we shall show more broadly that there is a common mathematical language which can be used to described the key notions of both database theory, in the standard relational case and in a more general ``algebraic'' form covering e.g.~a notion of probabilistic databases, and also of the theory of non-locality and contextuality, two of the key quantum phenomena. These features are  central to current discussions of quantum foundations, and provide non-classical resources for quantum information processing.

The present paper is meant to be an introduction to these two topics, emphasizing their common content, presented in a manner which hopefully will be accessible to readers without prior knowledge of either.

How should this unexpected connection be interpreted?
One idea is that the notion of contextuality is rather fundamental, and we can see some outlines of a common `logic of contextuality' arising from this appearance of common structure in very different settings.

Ideally, some deeper connections can also be found, leading to interesting transfers of results and methods. 
A first step in this direction has already been taken, in joint work with Georg Gottlob and Phokion Kolaitis \cite{AGK}, in the closely related field of constraint satisfaction. An algorithmic question which arises naturally from the quantum side (see \cite{abramsky2010relational}) leads to a refined version of the constraint satisfaction paradigm, \emph{robust constraint satisfaction}, and to interesting new complexity results.

\section{Relational Databases}

\subsection{Review of basic notions}

We shall begin by reviewing some basic notions of relational database theory.

We start with an example to show the concrete scenario which is to be formalized.

\subsubsection*{Example}

Consider the following data table:
\begin{center}
\begin{tabular}{|l|l|l|l|} \hline
\textbf{branch-name} & \textbf{account-no} & \textbf{customer-name} & \textbf{balance} \\ \hline \hline
Cambridge & $10991$-$06284$ & Newton & \pounds 2,567.53 \\ \hline
Hanover & $10992$-$35671$ & Leibniz & \euro 11,245.75 \\ \hline
\ldots & \ldots & \ldots & \ldots \\ \hline
\end{tabular}
\end{center}

\noindent Let us anatomize this table. There are a set of \emph{attributes}, 
\[ \{ \mbox{\textbf{branch-name}}, \mbox{\textbf{account-no}}, \mbox{\textbf{customer-name}}, \mbox{\textbf{balance}} \} \]
which name the columns of the table. The entries in the table are `tuples' which specify a value for each of the attributes. The table is a set of such tuples.
A database will in general have a set of such tables, each with a given set of attributes.
The \emph{schema} of the database --- a static, syntactic  specification of the kind of information which can reside in the database --- is given by specifying the set of attributes for each of the tables.
The state of the database at a given time will be given by a set of tuples of the appropriate type for each of the tables in the schema.


We now proceed to formalize these notions.

We fix some set $\Att$ which will serve as a universe of attributes. A database schema $\Sigma$ over $\Att$ is a finite family $\Sigma = \{A_1, \ldots , A_k\}$ of finite subsets of $\Att$. 

At this --- surprisingly early! --- point, we come to an interesting juncture. There are two standard approaches to formalising the notion of relation  which can be found in the relational database literature. One --- the `unnamed perspective' \cite{ABV} ---  is to formalize the notion of tuple as an ordered $n$-tuple in $D^n$ for some set $D$ of data values; a relation is then a subset of $D^n$. This is motivated by the desire to make the connection to the standard notion of relational structure in first-order logic as direct as possible. This choice creates a certain distance between the formal notion of relation, and the informal notion of table; in practice this is not a problem.

For our purposes, however, we wish to make a different choice --- the `named perspective' \cite{ABV}: we shall formalize the notion of tuple, and hence of relation, in a fashion which directly reflects the informal notion. As we shall see, this will have both mathematical and conceptual advantages for our purposes. At the same time, there is no real problem in relating this formalism to the alternative  one  found in the literature. Note that the style of formalization we shall use is also commonly found in the older literature on relational databases, see e.g.~\cite{ullman1983principles}.

We shall assume that for each $a \in \Att$ there is a set $D_a$ of possible data values for that attribute.
Thus for example the possible values for \textbf{customer-name} should be character strings, perhaps with some lexical constraints; while for \textbf{balance} the values should be pairs (\textbf{currency}, \textbf{amount}), where \textbf{currency} comes from some fixed list (\pounds, \euro, \ldots), and \textbf{amount} is a number. These correspond to \emph{domain integrity constraints} in the usual database terminology.

Given $A \in \Sigma$, we define the set of $A$-tuples to be $\prod_{a \in A} D_a$.
Thus an $A$-tuple is a function which assigns a data value in $D_a$ to each $a \in A$.

In our example above, the first tuple in the table corresponds to the function
\[ \{ \textbf{branch-name} \mapsto \mbox{Cambridge},  \textbf{account-no} \mapsto 10991{-}06284, \textbf{customer-name} \mapsto  \mbox{Newton}, \]
\[ \textbf{balance} \mapsto \pounds 2,567.53 \} \]

A relation of type $A$ is a finite set of $A$-tuples. Given a schema $\Sigma$, an \emph{instance} of the schema, representing a possible state of the database, is given by specifying a relation of type $A$ for each $A \in \Sigma$.

\subsubsection*{Operations on relations}

We consider some of the fundamental operations on relations, which play a central r\^ole in relational databases. Firstly, relations of type $A$ live in the powerset $\Pow(\prod_{a \in A} D_a)$, which is a boolean algebra; so boolean operations such as union, intersection, and set difference can be applied to them.

Note that  the set of data values may in general be infinite, whereas the relations considered in database theory are finite.
Thus one must use set difference rather than an `absolute' notion of set complement.


Next, we consider the operation of projection. In the language of $A$-tuples, projection is \emph{function restriction}. That is, given an $A$-relation $R$, and a subset $B \subseteq A$, we define:
\[ R |_B := \{ t |_B \mid t \in R \} . \]
Here, since $t \in \Tup{A}$, $t |_B$ just means restriction of the function $t$ to $B$, which is a subset of its domain. This operation is then lifted pointwise to relations.

Now we consider the independent combination of relations, which is cartesian product in the standard formalism. The representation of tuples as functions leads to a `logarithmic shift' in the representation\footnote{Think of $2^{a}2^{b} = 2^{a+b}$, and hence $\log(xy) =  \log(x)+\log(y)$.}, whereby this operation is represented by \emph{disjoint union} of attribute sets.

Given an $A$-relation $R$ and a $B$-relation $S$, we form the disjoint union $A \sqcup B$, and the $A \sqcup B$-relation 
\[ R \otimes S \; := \; \{ t \in \Tup{A \sqcup B} \; : \; t|_A \in R \AND t|_B \in S \} . \]
Of course, as concrete sets $A$ and $B$ may overlap. We can force them to be disjoint by `tagging' them appropriately, e.g.
\[ A \sqcup B \; := \; \{ 0\} \times A \;\, \cup \;\, \{1\} \times B . \]
The minor housekeeping details of such tagging can safely be ignored.\footnote{The relevant result is the coherence theorem for monoidal categories \cite{mac1998categories}.}
We shall henceforth do so without further comment.

This is only a subset of the operations available in standard relational algebra \cite{ullman1983principles}. A more complete discussion could be given in the present setting, but this will suffice for our purposes.

\subsection{The functorial view}

We shall now show how the relational database formalism, in the style we have developed it, has a direct expression in functorial terms. This immediately brings a great deal of mathematical structure into play, and will allow us to relate some important database notions to concepts of much more general standing.

We shall assume the rudiments of the language of categories, functors and natural transformations.
All the background we shall need is covered in the charming (and succinct) text \cite{pierce1991basic}.

We shall consider the partial order $\Atpos$ of finite subsets of $\Att$, ordered by inclusion, as a category.

We shall define a functor $\TS : \Atpos^{\op} \rTo \Set$ where $\TS(A)$ is the set of $A$-tuples.
Formally, we define
\[ \TS(A) \; := \; \Tup{A} , \]
and if $A \subseteq B$, we define the \emph{restriction map} $\rmap{B}{A} : \TS(B) \rTo \TS(A)$ by
\[ \rmap {B}{A} : t \mapsto t|_A . \]
It is easy to verify functoriality of $\TS$, which means that, whenever $A \subseteq B \subseteq C$,
\[ \rmap {B}{A} \circ \rmap{C}{B} = \rmap{C}{A}, \]
and also that $\rmap{A}{A} = \id_{A}$.
Thus $\TS$ is a \emph{presheaf}, and restriction is exactly function restriction.

We also have the covariant powerset functor $\Pow : \Set \rTo \Set$, which acts on functions by direct image: if $f : X \rTo Y$, then
\[ \Pow f : \Pow X \rTo \Pow Y :: S \mapsto \{ f(x) \mid x \in S \} . \]
We can compose $\Pow$ with $\TS$ to obtain another presheaf 
\[ \RS := \Pow \circ \TS : \Atpos^{\op} \rTo \Set . \]
This presheaf assigns the set of $A$-relations to each set of attributes $A$; while the restriction map
\[ \rmap{B}{A} : \RS(B) \rTo \RS(A) \]
is exactly the operation of relation restriction, equivalent to the standard notion of projection in relation algebra, which we defined previously:
\[ \rmap{B}{A} : R \mapsto R|_A . \]

\subsubsection*{Natural Join}

One of the most important operations in relational algebra is \emph{natural join}. Given an $A$-relation $R$ and a $B$-relation $S$, we define an $(A \cup B)$-relation $R \nj S$:
\[ R \nj S \; := \; \{ t \in \Tup{A \cup B} \mid t|_A \in R \AND t|_B \in S \} . \]
We shall now show how this operation can be characterized in categorical terms.

Note firstly that since the powerset is naturally ordered by set inclusion, we can consider $\RS$ as a functor
\[ \RS : \Atpos^{\op} \rTo \Pos \]
where $\Pos$ is the category of posets and monotone maps. $\Pos$ is order-enriched; given monotone maps $f, g : P \rarr Q$, we can define the pointwise order:
\[ f \leq g \; \equiv \; \forall x \in P. \, f(x) \leq g(x) . \]
Now suppose we are given attribute sets $A$ and $B$. We consider the following diagram arising from the universal property of product in $\Set$.
\begin{diagram}[8em]
\RS(A) & \lTo^{\pi_1} & \RS(A) \times \RS(B) & \rTo^{\pi_2} & \RS(B) \\
& \luTo_{\rmap{A \cup B}{A}} & \udash^{\langle \rmap{A \cup B}{A}, \rmap{A \cup B}{B} \rangle} & \ruTo_{\rmap{A \cup B}{B}} \\
& & \RS(A \cup B) &&
\end{diagram}

\begin{proposition}
The natural join $\nj : \RS(A) \times \RS(B) \rarr \RS(A \cup B)$ is uniquely characterized as the left adjoint of $\langle \rmap{A \cup B}{A}, \rmap{A \cup B}{B} \rangle$; that is, as the unique map satisfying
\[ \id_{\RS(A \cup B)} \; \leq \; \; \nj \circ \, \langle \rmap{A \cup B}{A}, \rmap{A \cup B}{B} \rangle, \qquad
\langle \rmap{A \cup B}{A}, \rmap{A \cup B}{B} \rangle \, \circ \nj \;\; \leq \; \id_{\RS(A) \times \RS(B)} . \]
\end{proposition}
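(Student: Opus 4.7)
The approach I would take is to verify the two displayed inequalities directly from the set-theoretic definitions of $\nj$ and of restriction, and then recover uniqueness from the general fact that adjunctions in a poset-enriched 2-category (here $\Pos$) determine their adjoints uniquely. Before starting, I would observe that $\nj$ is monotone in both arguments (if $R \subseteq R'$ and $S \subseteq S'$ then $R \nj S \subseteq R' \nj S'$, immediate from the definition), so the adjunction does make sense as an adjunction in $\Pos$.

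The first step is existence of the two inequalities. For $\id_{\RS(A \cup B)} \leq \nj \circ \langle \rmap{A \cup B}{A}, \rmap{A \cup B}{B} \rangle$, I would take $T \in \RS(A \cup B)$ and an arbitrary $t \in T$, and observe that $t|_A \in T|_A$ and $t|_B \in T|_B$ immediately yield $t \in T|_A \nj T|_B$; hence $T \subseteq T|_A \nj T|_B$. For $\langle \rmap{A \cup B}{A}, \rmap{A \cup B}{B} \rangle \circ \nj \leq \id_{\RS(A) \times \RS(B)}$, I would take $r \in (R \nj S)|_A$, write $r = t|_A$ for some $t \in R \nj S$, and read off $r \in R$ from the defining condition of $\nj$; the component for $B$ is symmetric.

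The second step is uniqueness. Suppose $\nj' : \RS(A) \times \RS(B) \rarr \RS(A \cup B)$ is another monotone map satisfying both inequalities. Then
\[ \nj' \; \leq \; \nj \circ \langle \rmap{A \cup B}{A}, \rmap{A \cup B}{B} \rangle \circ \nj' \; \leq \; \nj, \]
obtained by pre-composing $\id \leq \nj \circ \langle \rmap{A \cup B}{A}, \rmap{A \cup B}{B} \rangle$ with $\nj'$, and then applying monotonicity of $\nj$ to $\langle \rmap{A \cup B}{A}, \rmap{A \cup B}{B} \rangle \circ \nj' \leq \id$. The symmetric calculation yields $\nj \leq \nj'$, and hence equality. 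I do not expect any substantial obstacle here: everything reduces to elementary set-theoretic manipulation, and the only thing to be careful about is matching the direction of the unit/counit inequalities with the claim that $\nj$ is \emph{left} adjoint to the pairing. The real content is the characterization $T \subseteq R \nj S \IFF T|_A \subseteq R \AND T|_B \subseteq S$, which is immediate from the definition of $\nj$ and which encapsulates both inequalities at once.
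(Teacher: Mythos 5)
Your proof is correct and complete; the paper states this proposition without any proof, and your argument --- direct verification of the two inequalities from the definitions of $\nj$ and restriction, followed by the standard uniqueness argument for adjoints between posets --- is exactly the routine argument the paper leaves implicit (your closing observation that everything is packaged in the equivalence $T \subseteq R \nj S \IFF T|_A \subseteq R \AND T|_B \subseteq S$ is also the cleanest way to see it). On the direction issue you flag: the displayed inequalities as stated, and as you correctly verify them, exhibit $\langle \rmap{A \cup B}{A}, \rmap{A \cup B}{B} \rangle \dashv \nj$, i.e.\ under the usual convention $\nj$ is the \emph{right} (upper) adjoint of the pairing map --- consistent with your Galois-connection formulation and with the paper's later appeal to ``the adjoint property of the natural join'' to deduce $S \subseteq \nj_{i} R_i$ from $S|_{A_i} = R_i$ --- so the word ``left'' in the proposition is a terminological slip (or a reversed convention), not a defect in your argument, and your proof of the inequalities and of uniqueness stands as written.
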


The fact that in general a relation $R \in \RS(A \cup B)$ satisfies only
\[ R \; \subseteq \; R|_A \nj R|_B , \]
with strict inclusion possible, corresponds to the fact that natural join is in general a `lossy' operation. Lossless joins correspond exactly to the case when equality holds.

\subsection{The sheaf-theoretic view}
\label{sheafsec}

We shall now show, building on the presheaf structure described in the previous sub-section, how a number of important database notions can be interpreted geometrically, in the language of sheaves and presheaves.

\subsubsection*{Schemas as covers and gluing conditions}
We shall interpret a schema $\Sigma = \{A_1, \ldots , A_k\}$ of finite subsets of $\Att$ as a \emph{cover}.
That is, we think of the attribute sets $A_i$ as `open sets' expressing some local information in the sense of related clusters of attributes; these sets cover $A := \bigcup_{i=1}^k A_i$, the global set of attributes for the schema. Conversely, we think of the global set $A$ as being decomposed into the local clusters $A_i$; which is exactly the standard point of view in databases.

The basic idea of sheaf theory is to analyze the passage from local to global behaviour in mathematical structures. A number of important notions in databases have exactly this character, and can be described naturally in sheaf-theoretic terms.

An instance $(R_1, \ldots , R_k)$ of a schema $\Sigma$ is given by specifying a relation $R_i \in \RS(A_i)$ for each $A_i \in \Sigma$. In sheaf-theoretic language, this is a family of local sections, defined over the open sets in the cover. A central issue in geometric terms is whether we can glue these local sections together into a global section defined over $A := \bigcup_{i=1}^k A_i$.

More precisely, we can ask:
\begin{center}
Does there exist a relation $R \in \RS(A)$ such that $R |_{A_i} = R_i$, $i = 1, \ldots , k$. 
\end{center}
We say that the \emph{gluing condition} is satisfied for the instance $(R_1, \ldots , R_k)$ if such a relation exists.

This has been studied as an algorithmic question in database theory, where it is referred to as the \emph{join consistency property}; it is shown in \cite{honeyman1980testing} that it is NP-complete.

Note that a \emph{necessary condition} for this to hold is that, for all $i, j = 1, \ldots , k$:
\begin{equation}
\label{compeq}
R_i |_{A_i \cap A_j} \; = R_j |_{A_i \cap A_j} . 
\end{equation}
Indeed, if such an $R$ exists, then 
\[ R_i |_{A_i \cap A_j} = (R |_{A_i}) |_{A_i \cap A_j} = R |_{A_i \cap A_j} , \]
using the functoriality of restriction, and similarly for $R_j |_{A_i \cap A_j}$.

We shall say that a database instance $(R_1, \ldots , R_k)$ for which this condition~(\ref{compeq}) holds has consistent projections, and refer to the family of relations in the instance as a \emph{compatible family}.

These notions can be generalized to apply to any presheaf. If the gluing condition can \emph{always} be satisfied, for any cover and any family of compatible elements, and moreover there is a \emph{unique} element which satisfies it, then the presheaf is a sheaf.

It is of course a well-known fact of life in databases, albeit expressed in a different language, that our relational presheaf $\RS$ is \emph{not} a sheaf.

In fact, we have the following:
\begin{proposition}
An instance $(R_1, \ldots , R_k)$ satisfies the gluing condition if and only if there is a universal relation $R$ for the instance.
\end{proposition}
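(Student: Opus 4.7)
The plan is to recognize that this proposition is essentially a bookkeeping statement that identifies two notions defined by the same equation. The gluing condition, as just defined in the excerpt, asks for the existence of some $R \in \RS(A)$, where $A = \bigcup_{i=1}^{k} A_i$, such that $R|_{A_i} = R_i$ for each $i = 1, \ldots, k$. A \emph{universal relation} for the instance $(R_1, \ldots, R_k)$, in the standard relational database sense (see e.g. \cite{ullman1983principles}), is by definition a single relation over the global attribute set $A$ whose projection onto each $A_i$ yields $R_i$. Under the named perspective developed in the previous subsection, the projection of an $A$-relation onto $A_i \subseteq A$ is exactly the restriction map $\rmap{A}{A_i} : R \mapsto R|_{A_i}$; this identification was already noted when $\RS$ was introduced.

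Given this, both directions are immediate. If the gluing condition holds, the witnessing $R \in \RS(A)$ satisfies $\rmap{A}{A_i}(R) = R_i$ for all $i$, which is literally the condition that $R$ is a universal relation for the instance. Conversely, a universal relation $R$ provides a global section $R \in \RS(A)$ with $R|_{A_i} = R_i$, so it witnesses the gluing condition.

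The only step that requires any care at all is the translation between the "projection onto $A_i$" in the unnamed formalism typically used to define universal relations, and the function-restriction map $\rmap{A}{A_i}$ in the named/functorial formalism used here. But this correspondence is exactly what makes the named perspective attractive for this paper, and it has been set up explicitly earlier in the section, so there is no genuine obstacle; the content of the proposition is the conceptual observation that the classical database question of the existence of a universal relation is literally an instance of the sheaf-theoretic question of whether a compatible family of local sections can be glued to a global section.
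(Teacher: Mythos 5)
Your proof is correct and matches the paper's treatment: the paper gives no separate argument, but immediately after the proposition defines a universal relation for the instance as a relation on the whole attribute set from which each $R_i$ is recovered by projection, so the equivalence is exactly the definitional unfolding you describe. Your additional remark about identifying projection with the restriction map $\rmap{A}{A_i}$ in the named perspective is precisely the point the paper has already established when introducing $\RS$.
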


Here we take a universal relation for the instance by definition to be a relation defined on the whole set of attributes from which each of the relations in the instance can be recovered by projection.
This notion, and various related ideas,  played an important r\^ole in early developments in relational database theory; see e.g. \cite{maier1984foundations,fagin1982simplied,korth1984system,maier1983maximal,ullman1983principles}.

Thus the standard notion of universal relation in databases corresponds exactly to the standard notion of solution to the gluing condition in sheaf theory, for the particular case of the relational presheaf $\RS$.

It is also standard that a universal relation need not exist in general, and even if it exists, it need not be unique.
There is a substantial literature devoted to the issue of finding conditions under which these properties do hold.

There is a simple connection between universal relations and lossless joins.

\begin{proposition}
Let $(R_1, \ldots , R_k)$ be an instance for the schema $\Sigma = \{ A_1, \ldots , A_k \}$. Define $R := \; \nj_{i=1}^k R_i$.
Then a universal relation for the instance exists if and only if $R |_{A_i} = R_i$, $i = 1, \ldots , k$, and in this case $R$ is the largest relation in $\RS(\bigcup_i A_i)$ satisfying the gluing condition.
\end{proposition}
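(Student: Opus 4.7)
The plan is to show that the natural join $R := \nj_{i=1}^k R_i$ is always a kind of ``best candidate'' for a universal relation, and that it succeeds precisely when its projections recover the original relations. Two basic inequalities about natural join will do all the work, so I would first isolate them as lemmas and then chain them together.

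The first observation to nail down is that for any relation $S \in \RS(\bigcup_i A_i)$, one has $S \subseteq \nj_{i=1}^k S|_{A_i}$, which is the $k$-ary generalization of the ``lossy join'' inclusion $R \subseteq R|_A \nj R|_B$ already noted in the excerpt (and equivalently, one unit of the adjunction in the preceding Proposition). The second, dual observation is that for \emph{any} family $(R_1,\ldots,R_k)$ with $R_i \in \RS(A_i)$, the natural join $\nj_i R_i$ satisfies $(\nj_i R_i)|_{A_j} \subseteq R_j$ for each $j$; this is immediate from the defining condition $t|_{A_j}\in R_j$ for each $t \in \nj_i R_i$, together with the definition of projection as pointwise image. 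Neither fact requires the compatibility condition from Section~\ref{sheafsec}.

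With these in hand, the forward direction goes as follows. Suppose a universal relation $S$ exists, so $S|_{A_i}=R_i$ for all $i$. Applying the first observation, $S \subseteq \nj_{i} S|_{A_i} = \nj_i R_i = R$. Projecting (restriction is monotone, since it is function restriction lifted by $\Pow$), we get $R_i = S|_{A_i} \subseteq R|_{A_i}$. Combined with $R|_{A_i} \subseteq R_i$ from the second observation, this yields $R|_{A_i}=R_i$, as required. The backward direction is then trivial: if $R|_{A_i}=R_i$ for all $i$, then $R$ itself witnesses the existence of a universal relation.

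For the maximality clause, let $S$ be any element of $\RS(\bigcup_i A_i)$ satisfying the gluing condition $S|_{A_i}=R_i$. By exactly the same computation as in the forward direction, $S \subseteq \nj_i S|_{A_i} = \nj_i R_i = R$, so $R$ dominates every solution. I do not anticipate a genuine obstacle here; the only mildly subtle point is keeping track of which inclusion goes in which direction, since one of them holds unconditionally while the other only becomes an equality under the hypothesis. It may be clearest to phrase both observations as one-line consequences of the adjunction in the previous Proposition (extended from the binary to the $k$-ary case by induction on $k$), which subsumes both inclusions as the unit and counit.
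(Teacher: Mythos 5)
Your proposal is correct and follows essentially the same route as the paper: the paper's proof likewise derives $S \subseteq \nj_{i=1}^k R_i$ for any solution $S$ from the adjoint property of the natural join, and then obtains $R_i \subseteq S|_{A_i} \subseteq (\nj_{i=1}^k R_i)|_{A_i} \subseteq R_i$ by monotonicity of projection. Your version merely makes explicit the two inclusions (unit and counit) and the $k$-ary extension of the adjunction, which the paper leaves implicit.
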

\begin{proof}
We note that, if a relation $S$ satisfies $S|_{A_i} = R_i$, $i = 1, \ldots , k$, then $S \subseteq \; \nj_{i=1}^k R_i$ by the adjoint property of the natural  join. Moreover, since projection is monotone, in this case $R_i \subseteq S|_{A_i} \subseteq (\nj_{i=1}^k R_i) |_{A_i} \subseteq R_i$.
\end{proof}

There are further categorical aspects of relational databases which it might prove interesting to pursue. In particular, one can define categories of schemas and of instances and their morphisms, and the construction of colimits in these categories may be applicable to issues of data integration.
However, we shall not pursue these ideas here.
Instead, we will turn to a natural generalization of relational databases which arises rather effortlessly from the formalism we have developed to this point.

\section{Algebraic Databases}

We begin by revisiting the definition of the relational presheaf $\RS$ in terms of the covariant powerset functor $\Pow$. An alternative presentation of subsets is in terms of \emph{characteristic functions}.
That is, we have the familiar isomorphism $\Pow(X) \cong \Two^X$, where $\Two := \{ 0, 1 \}$ is the 2-element boolean algebra.

We can also use this representation to define the functorial action of powerset. Given $s : X \rarr \Two$ and $f : X \rarr Y$, we define $f^*(s) : Y \rarr \Two$ by
\begin{equation}
\label{matimdef}
f^*(s) : y \mapsto \bigvee_{f(x)=y} s(x) . 
\end{equation}
It is easy to see that this is equivalent to
\[ f^*(s)(y) = 1 \IFF \exists x \in S. \, f(x) = y . \]
Here $S$ is the subset of $X$ whose characteristic function is $s$.

We can specialise this to the case of an inclusion function $\iota : A \rinc B$ which induces a map $\Two^B \rarr \Two^A$ by restriction:
\[ s : B \rarr \Two \;\; \mapsto \;\; (s |_A) : A \rarr \Two . \]
What we obtain in this case is exactly the notion of \emph{projection} of a relation, as defined in the previous section. 

The advantage of this `matrix' style of definition of the powerset is that it can immediately be generalized rather widely. There is a minor caveat. In the above definition, we used the fact that $\Two$ is a \emph{complete} boolean algebra, since there was no restriction on the cardinality of the preimages of $f$. In the database context, of course, all sets are typically finite.\footnote{The sets of data values $D_a$ may be infinite, but only finitely many values will appear in a database instance.} We shall enforce a finiteness condition explicitly in our general definition.

We recall that a \emph{commutative semiring} is a structure $(R, +, 0, \cdot, 1)$, where $(R, +, 0)$ and $(R, \cdot, 1)$ are commutative monoids, and moreover multiplication distributes over addition:
\[ x \cdot (y + z) = x \cdot y + x \cdot z . \]
Many examples of commutative semirings arise naturally in Computer Science: we list a few of the most common.
\begin{itemize}
\item The reals
\[ (\Real, +, 0, {\times}, 1). \]
More generally, any commutative ring is a commutative semiring.
\item  The non-negative reals
\[ (\Rpos, +, 0, {\times}, 1). \]
\item The booleans
\[ \Two = (\{ 0, 1 \}, \vee, 0, \wedge, 1) . \]
More generally, \emph{idempotent} commutative semirings are exactly the distributive lattices.
\item The min-plus semiring
\[ (\Rpos \cup \{\infty\}, {\min}, \infty, +, 0). \]
\end{itemize}
We also note the r\^ole played by \emph{provenance semirings} in database theory \cite{green2007provenance,buneman2007provenance,DBLP:journals/ftdb/CheneyCT09}.

We fix a semiring $R$. Given a set $X$, the \emph{support} of a function $v : X \rarr R$ is the set of $x \in X$ such that $v(x) \neq 0$. We write $\supp(v)$ for the support of $v$. We shall write $\VR(X)$ for the set of functions $v : X \rarr R$ of \emph{finite} support. We shall write $\DR(X)$ for the subset of $\VR(X)$ of those functions $d : X \rarr R$ such that
\[ \sum_{x \in X} d(x) = 1 . \]
Note that the finite support condition ensures that this sum is well-defined.

We shall refer to elements of $\VR(X)$ as $R$-valuations on $X$, and of $\DR(X)$ as $R$-distributions.

We consider a few examples:
\begin{itemize}
\item If we take $R = \Two$, then $\VR(X)$ is the set of finite subsets of $X$, and $\DR(X)$ is the set of finite non-empty subsets.
\item If we take $R = (\Rpos, +, 0, {\times}, 1)$, then $\DR(X)$ is the set of discrete (finite-support) probability distributions on $X$.
\end{itemize}
Algebraically, $\VR(X)$ is the free $R$-semimodule over the set $X$ \cite{golan1999semirings}.

These constructions extend to functors on $\Set$. Given $f : X \rarr Y$, we define
\[ \VR(f) : \VR(X) \rarr \VR(Y) :: v \mapsto [ y \mapsto \sum_{f(x) =y} v(x) ] . \]
This restricts to $\DR$ in a well-defined fashion.
Taking $R = \Two$, we see that $\VR(f)$ is exactly the direct image of $f$, defined as in~(\ref{matimdef}).


We can now generalize databases from the standard relational case to `relations valued in a semiring' by replacing $\Pow$ by $\VR$ (or $\DR$) in our definition of $\RS$; that is, we take $\RS := F \circ \TS$, where $F$ is $\VR$ or $\DR$ for some commutative semiring $R$. We recover the standard notion exactly when $R = \Two$. In the case where $R = (\Rpos, +, 0, {\times}, 1)$ and $F = \DR$, we obtain a notion of probabilistic database, where each relation specifies a probability distribution over the set of tuples for its attribute-set.

Moreover, our descriptions of the key database operations all generalise to any semiring.
If we apply the definition of the functorial action of $\VR$ or $\DR$ to the case of restriction maps induced by inclusions, we obtain the right notion of \emph{generalised projection}, which can be applied to any algebraic database.
We have already seen that we recover the standard notion of projection in the Boolean case.
In the case where the semiring is the non-negative reals, so we are dealing with probability distributions, projection is exactly \emph{marginalization}.

We also note an important connection between probabilistic and relational databases. We can always pass from a probabilistic to a relational instance by taking the \emph{support} of the distribution.
Algebraically, this corresponds to mapping all positive probabilities to $1$; this is in fact the action of the unique semiring homomorphism from the non-negative reals to the booleans.

In general, many natural properties of databases will be \emph{preserved} by this homomorphic mapping.
This means that if we show that such a property is \emph{not} satisfied by the support, we can conclude that it is not satisfied by the probabilistic instance. Thus we can leverage negative results at the relational level, and lift them to the probabilistic setting.

We shall see a significant example of a probabilistic database in the next section.

\section{From databases to observational scenarios}

We shall now offer an alternative interpretation of the relational database formalism, with a very different motivation. This will expose a surprising connection between database theory, and on face value a completely different topic, namely Bell's theorem in the foundations of quantum mechanics \cite{bell1964einstein}.

Our starting point is the idealized situation depicted in the following diagram.

\begin{center}
\begin{tikzpicture}[scale=2.54]
\ifx\dpiclw\undefined\newdimen\dpiclw\fi
\global\def\dpicdraw{\draw[line width=\dpiclw]}
\global\def\dpicstop{;}
\dpiclw=0.8bp
\dpiclw=2bp
\dpicdraw[fill=lightgray](0,-0.301272) rectangle (0.903816,0.301272)\dpicstop
\dpiclw=1bp
\dpicdraw (0.451908,-0.524213) circle (0.047444in)\dpicstop
\draw (0.451908,-0.403705) node[above=-1.807632bp]{$a$};
\draw (0.572417,-0.524213) node[right=-1.807632bp]{$b$};
\draw (0.451908,-0.644722) node[below=-1.807632bp]{$c$};
\draw (0.331399,-0.524213) node[left=-1.807632bp]{$d$};
\draw (0.451908,-0.524213) node{$\mathbf{\cdot}$};
\dpicdraw[fill=black](0.397679,-0.578442) circle (0.009489in)\dpicstop
\dpiclw=2bp
\dpicdraw[fill=lightgray](2.108904,-0.301272) rectangle (3.01272,0.301272)\dpicstop
\dpiclw=1bp
\dpicdraw (2.560812,-0.524213) circle (0.047444in)\dpicstop
\draw (2.560812,-0.403705) node[above=-1.807632bp]{$a'$};
\draw (2.681321,-0.524213) node[right=-1.807632bp]{$b'$};
\draw (2.560812,-0.644722) node[below=-1.807632bp]{$c'$};
\draw (2.440304,-0.524213) node[left=-1.807632bp]{$d'$};
\draw (2.560812,-0.524213) node{$\mathbf{\cdot}$};
\dpicdraw[fill=black](2.615041,-0.469984) circle (0.009489in)\dpicstop
\dpicdraw (0.150636,0.331399)
 ..controls (0.150636,0.497787) and (0.28552,0.632671)
 ..(0.451908,0.632671)
 ..controls (0.618296,0.632671) and (0.75318,0.497787)
 ..(0.75318,0.331399)\dpicstop
\dpicdraw (0.692926,0.331399)
 --(0.75318,0.331399)\dpicstop
\dpicdraw (0.687659,0.38151)
 --(0.746597,0.394037)\dpicstop
\dpicdraw (0.672089,0.42943)
 --(0.727134,0.453938)\dpicstop
\dpicdraw (0.646895,0.473066)
 --(0.695642,0.508483)\dpicstop
\dpicdraw (0.61318,0.51051)
 --(0.653498,0.555288)\dpicstop
\dpicdraw (0.572417,0.540127)
 --(0.602544,0.592308)\dpicstop
\dpicdraw (0.526387,0.560621)
 --(0.545006,0.617926)\dpicstop
\dpicdraw (0.477101,0.571097)
 --(0.4834,0.631021)\dpicstop
\dpicdraw (0.426715,0.571097)
 --(0.420417,0.631021)\dpicstop
\dpicdraw (0.37743,0.560621)
 --(0.35881,0.617926)\dpicstop
\dpicdraw (0.331399,0.540127)
 --(0.301272,0.592308)\dpicstop
\dpicdraw (0.290636,0.51051)
 --(0.250318,0.555288)\dpicstop
\dpicdraw (0.256921,0.473066)
 --(0.208174,0.508483)\dpicstop
\dpicdraw (0.231727,0.42943)
 --(0.176682,0.453938)\dpicstop
\dpicdraw (0.216157,0.38151)
 --(0.15722,0.394037)\dpicstop
\dpicdraw (0.21089,0.331399)
 --(0.150636,0.331399)\dpicstop
\dpicdraw (0.451908,0.331399)
 --(0.598149,0.437649)\dpicstop
\filldraw (0.615857,0.413276)
 --(0.695642,0.508483)
 --(0.58044,0.462023) --cycle
\dpicstop
\dpicdraw (2.25954,0.331399)
 ..controls (2.25954,0.497787) and (2.394424,0.632671)
 ..(2.560812,0.632671)
 ..controls (2.7272,0.632671) and (2.862084,0.497787)
 ..(2.862084,0.331399)\dpicstop
\dpicdraw (2.80183,0.331399)
 --(2.862084,0.331399)\dpicstop
\dpicdraw (2.796563,0.38151)
 --(2.855501,0.394037)\dpicstop
\dpicdraw (2.780993,0.42943)
 --(2.836038,0.453938)\dpicstop
\dpicdraw (2.7558,0.473066)
 --(2.804547,0.508483)\dpicstop
\dpicdraw (2.722085,0.51051)
 --(2.762403,0.555288)\dpicstop
\dpicdraw (2.681321,0.540127)
 --(2.711448,0.592308)\dpicstop
\dpicdraw (2.635291,0.560621)
 --(2.65391,0.617926)\dpicstop
\dpicdraw (2.586006,0.571097)
 --(2.592304,0.631021)\dpicstop
\dpicdraw (2.535619,0.571097)
 --(2.529321,0.631021)\dpicstop
\dpicdraw (2.486334,0.560621)
 --(2.467714,0.617926)\dpicstop
\dpicdraw (2.440304,0.540127)
 --(2.410176,0.592308)\dpicstop
\dpicdraw (2.39954,0.51051)
 --(2.359222,0.555288)\dpicstop
\dpicdraw (2.365825,0.473066)
 --(2.317078,0.508483)\dpicstop
\dpicdraw (2.340632,0.42943)
 --(2.285587,0.453938)\dpicstop
\dpicdraw (2.325062,0.38151)
 --(2.266124,0.394037)\dpicstop
\dpicdraw (2.319795,0.331399)
 --(2.25954,0.331399)\dpicstop
\dpicdraw (2.560812,0.331399)
 --(2.439858,0.465732)\dpicstop
\filldraw (2.462247,0.485892)
 --(2.359222,0.555288)
 --(2.417469,0.445573) --cycle
\dpicstop
\draw (0.451908,-0.825485) node[below=-1.807632bp]{Alice};
\draw (2.560812,-0.825485) node[below=-1.807632bp]{Bob};
\end{tikzpicture}

\end{center}

There are several agents or experimenters, who can each select one of several different measurements $a, b, c, d, \ldots$ to perform, and  observe one of several different outcomes. These agents may or may not be spatially separated. When a system is prepared in a certain fashion and measurements are selected, some corresponding outcomes will be observed. These individual occurrences or `runs' of the system are the basic events.
Repeated runs allow relative frequencies to be tabulated, which can be summarized by a probability distribution on events for each selection of measurements. We shall call such a family of probability distributions, one for each choice of measurements,  an \emph{empirical model}.

As an example of such a model, consider the following table.
\begin{center}
\begin{tabular}{ll|ccccc}
 &  & $(0, 0)$ & $(1, 0)$ & $(0, 1)$ & $(1, 1)$  &  \\ \hline
$a$ & $b$ & $1/2$ & $0$ & $0$ & $1/2$ & \\
$a'$ & $b$ & $3/8$ & $1/8$ & $1/8$ & $3/8$ & \\
$a$ & $b'$ & $3/8$ & $1/8$ & $1/8$ & $3/8$ &  \\
$a'$ & $b'$ & $1/8$ & $3/8$ & $3/8$ & $1/8$ & 
\end{tabular}
\end{center}
The intended scenario here is that Alice can choose between measurement settings $a$ and $a'$, and Bob can choose $b$ or $b'$.
These will correspond to different quantities which can be measured.\footnote{For example, in the quantum case these settings may correspond to different directions along which to measure `Spin Up' or `Spin Down' \cite{yanofsky2008quantum}.}
We assume that these choices are made independently. Thus the \emph{measurement contexts}  are
\[ \{ a, b \}, \quad \{ a', b\}, \quad \{ a, b' \}, \quad \{ a', b' \} ,\]
and these index the rows of the table.
Each measurement has possible outcomes $0$ or $1$.

Note that, with a small change of perspective, we can see this in database terms.
Take the global set of attributes $\Att = \{ a, a', b, b' \}$, and consider the schema
\[ \Sigma := (\{ a, b \}, \; \{ a', b\}, \; \{ a, b' \}, \; \{ a', b' \}) . \]
For each $a \in \Att$, we take $D_a := \{ 0, 1\}$.

For each $A \in \Sigma$, we have a `table' in the algebraically generalized sense discussed in the previous section. That is, we have a distribution $d_A \in \DR \circ \TS(A)$, where $R = \Rpos$ is the semiring of non-negative reals. Thus $d_A$ is a probability distribution on $\TS(A)$, the set of $A$-tuples.

To make a direct connection with standard relational databases, we can pass to the support of the above table, which yields the following:
\begin{center}
\begin{tabular}{ll|ccccc}
 &  & $(0, 0)$ & $(1, 0)$ & $(0, 1)$ & $(1, 1)$  &  \\ \hline
$a$ & $b$ & $1$ & $0$ & $0$ & $1$ & \\
$a'$ & $b$ & $1$ & $1$ & $1$ & $1$ & \\
$a$ & $b'$ & $1$ & $1$ & $1$ & $1$ &  \\
$a'$ & $b'$ & $1$ & $1$ & $1$ & $1$ & 
\end{tabular}
\end{center}
This corresponds to the instance of the schema $\Sigma$ where for each $A = \{ \alpha, \beta \} \in \Sigma \setminus \{\{ a, b \}\}$,
there is the `full' table of all possible tuples:
\begin{center}
\begin{tabular}{| c | c |} \hline
$\alpha$ & $\beta$ \\ \hline \hline
 0 & 0 \\ \hline
 0 & 1 \\ \hline
 1 & 0 \\ \hline
 1 & 1 \\ \hline
\end{tabular}
\end{center}
while for $\{ a, b \}$ we have the table with only two tuples:
\begin{center}
\begin{tabular}{| c | c |} \hline
$a$ & $b$ \\ \hline \hline
 0 & 0 \\ \hline
 1 & 1 \\ \hline
\end{tabular}
\end{center}

Thus we have a formal passage between empirical models and relational databases.
To go further, we must understand how empirical models such as these can be used to draw striking conclusions about the foundations of physics.

\section{Empirical models and hidden variables}

Most of our discussion is independent of any particular physical theory.
However, it is important to understand how quantum mechanics, as our most highly confirmed theory,  gives rise to a class of empirical models of the kind we have been discussing.

To obtain such a model, we must provide the following ingredients:
\begin{itemize}
\item A quantum state.
\item For each of the `measurement settings', which correspond to attributes in database terms, a physical observable or measurable quantity. Each such observable will have a set of associated possible outcomes, which will correspond to the set of data values associated with that attribute.
\end{itemize}
The `statistical algorithm' of quantum mechanics will then prescribe a probability for each measurement outcome when the given state is measured with that observable.

Although we shall not really need the details of this, we briefly recall some basic definitions.
For further details, see e.g.~\cite{nielsen2000quantum,yanofsky2008quantum}.

\subsection*{A crash course in qubits}

Whereas a classical bit register has possible states $0$ or $1$, a qubit state is given by a \emph{superposition} of these states. More precisely, a (pure) qubit state is given by a vector in the 2-dimensional complex vector space $\Complex^2$, \ie a complex linear combination $\alpha_0 \ket{0} + \alpha_1 \ket{1}$, subject to the normalization constraint $|\alpha_0|^2 +  |\alpha_1|^2 = 1$. Here $\ket{0}, \ket{1}$ is standard Dirac notation for the basis vectors $[1, 0]^T$ and $[0, 1]^T$.

Measurement of such a state (in the $\ket{0}$, $\ket{1}$ basis) is inherently probabilistic; we get $\ket{i}$ with probability $|\alpha_i|^2$. 

There is a beautiful geometric picture of this complex 2-dimensional geometry in real three-dimensional space. This is the Bloch sphere representation:

\begin{center}
\epsfig{figure=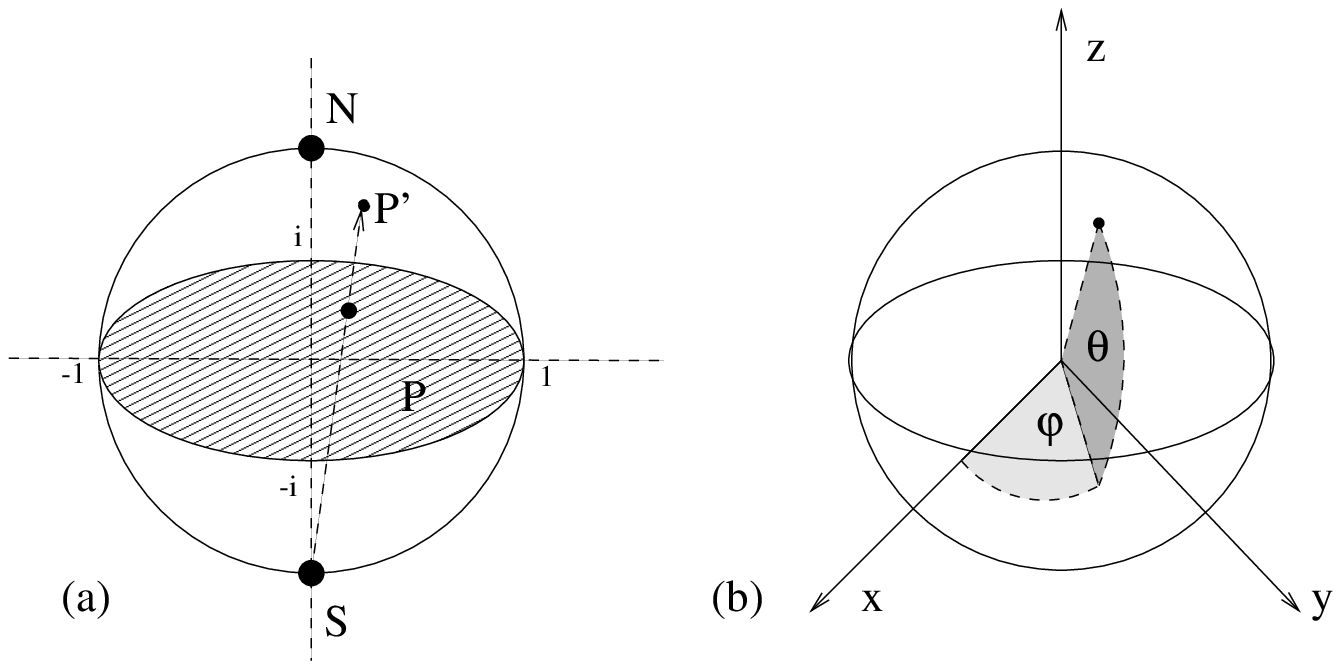,width=2.5in}
\end{center}

The pure qubit states correspond to points on the surface of the sphere.
However, this one-qubit case does not yet provide non-classical resources for information processing.
Things get interesting with $n$-qubit registers
\[ \sum_i \alpha_i \ket{i}, \qquad i \in \{ 0, 1 \}^n . \]
It is at this point, in particular, that \emph{entanglement phenomena} arise.

A typical example of an entangled state is the Bell state:
\begin{center}
\begin{tikzpicture}[scale=2.54]
\ifx\dpiclw\undefined\newdimen\dpiclw\fi
\global\def\dpicdraw{\draw[line width=\dpiclw]}
\global\def\dpicstop{;}
\dpiclw=0.8bp
\dpicdraw[fill=white!10!black](0.125,0) circle (0.049213in)\dpicstop
\dpicdraw[fill=white!10!black](1.875,0) circle (0.049213in)\dpicstop
\dpicdraw (0.25,0)
 --(1.75,0)\dpicstop
\draw (1,0) node[above=-0.75bp]{$|00 \rangle + |11\rangle$};
\end{tikzpicture}

\end{center}

We can think of two particles, each with a qubit state, held by Alice and Bob. However, these two particles are entangled. If Alice measures her qubit, then if she gets the answer $\ket{0}$, the state will collapse to $\ket{00}$, and if Bob measures his qubit, he will get the answer $\ket{0}$ with certainty; similarly if the result of Alice's measurement is $\ket{1}$.
This non-local effect creates new possibilities for quantum information processing.

Mathematically, compound systems are represented by the \emph{tensor product},
$\HH_1\otimes \HH_2$, with
typical element
\[\sum_{i} \lambda_i \cdot \phi_i \otimes \psi_i . \]
\emph{Superposition} encodes \emph{correlation}.

Entanglement is the physical phenomenon underlying
Einstein's `spooky action at a distance'.
Even if the particles are spatially separated, measuring one has an
effect on the state of the other.

Bell's achievement was to turn this puzzling feature of quantum mechanics into a theorem: quantum mechanics is \emph{essentially non-local}.

\subsection{Bell's theorem}

We look again at the empirical model

\begin{center}
\begin{tabular}{l|ccccc}
& $(0, 0)$ & $(1, 0)$ & $(0, 1)$ & $(1, 1)$  &  \\ \hline
$(a, b)$ & $1/2$ & $0$ & $0$ & $1/2$ & \\
$(a, b')$ & $3/8$ & $1/8$ & $1/8$ & $3/8$ & \\
$(a', b)$ & $3/8$ & $1/8$ & $1/8$ & $3/8$ &  \\
$(a', b')$ & $1/8$ & $3/8$ & $3/8$ & $1/8$ & 
\end{tabular}
\end{center}

This can be realized in quantum mechanics,
using a Bell state
\[ \frac{|0 0  \rangle \; + \; | 1 1 \rangle}{\sqrt{2}} , \]
subjected to measurements in the $XY$-plane of the Bloch sphere, at relative angle $\pi/3$.
Systems of this kind have been the subject of extensive experimental investigation, and the predictions of quantum mechanics can be taken to be very highly confirmed.

The question we shall ask, following Bell, is this:
Can we explain these empirical findings by a theory which is \emph{local} and \emph{realistic} in the following sense.
\begin{itemize}
\item A theory is realistic if it ascribes definite values to all observables for every physical state, independently of the activities of any external observers.
\item A theory is local if the outcomes of measurements on spatially separated subsystems depend only on common causal factors. In particular, for space-like separated measurements, the outcomes of the measurements should be independent of each other.
\end{itemize}
We allow for the fact that there may be salient features in the theory determining the outcomes of measurements of which we are not aware. These features are embodied in the notion of \emph{hidden variable}. Thus we take measurement outcomes to be determined, \emph{given some value of this hidden variable}. Moreover, we assume that this hidden variable acts in a local fashion with respect to spatially separated subsystems.

This gives a general notion of theory which behaves in a fashion broadly consistent with classical physical intuitions. The import of Bell's theorem is exactly that \emph{no such theory can account for the empirical predictions of quantum mechanics}. Hence, given that these predictions are so well-confirmed, we must abandon the classical world-view which underpins the assumptions of local realism.

To give a precise statement of Bell's theorem, we must formalize the notion of local hidden variable theory.
We shall give this in a streamlined form, which can be shown to be equivalent to more general definitions which have been considered (see e.g.~Theorem~7.1 in \cite{abramsky2011unified}).

We shall explain this notion in relation to the Bell table given above.
We have a total set of four measurement settings we are considering, two for Alice and two for Bob:
\[ \{ a, a', b, b' \} . \]
A simultaneous assignment of outcomes (0 or 1) to each of these is given by a function
\[ s : \{ a, a', b, b' \} \rTo \{ 0, 1 \} . \]
The fact that an (unknown) hidden variable may be affecting the outcome is captured by saying that we have a probability distribution $d$ on the set of all such functions $s$. Such a probability distribution can be taken to be a canonical form for a hidden variable.

The requirement on this distribution $d$ to be consistent with the empirical data is that, for each of the experimentally accessible combinations of measurement settings
\[ \{ a, b \}, \quad \{ a', b\}, \quad \{ a, b' \}, \quad \{ a', b' \} ,\]
the restriction (or marginalization) of $d$ to this set of measurements yields exactly the observed distribution on outcomes from the corresponding row of the table. For example, we must have $d | \{ a, b \} = d_1$, where
\[ d_1(0,0) = d_1(1,1) = 1/2, \quad d_1(0,1) = d_1(1,0) = 0. \]

A precise statement of a particular instance of Bell's theorem can now be given as follows:
\begin{proposition}
There is no distribution $d$ on the whole set of measurements which yields the observable distributions by restriction.
\end{proposition}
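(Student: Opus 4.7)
The plan is to argue by contradiction. Suppose a distribution $d$ on the set of global assignments $\Omega := \{0,1\}^{\{a,a',b,b'\}}$ does agree with all four table rows under restriction. I will then exhibit four events whose probabilities under $d$ are pinned down by those marginals but which jointly satisfy a linear inequality, derived from a purely logical entailment, that the numerical data violates.

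Concretely, I would introduce the four ``equality events''
\[ E_1 = \{s \in \Omega \mid s(a) = s(b)\}, \quad E_2 = \{s \mid s(a) = s(b')\}, \quad E_3 = \{s \mid s(a') = s(b)\}, \quad E_4 = \{s \mid s(a') = s(b')\}. \]
Each $E_i$ depends only on variables of a single measurement context, so $P_d(E_i)$ is determined by the observed marginal for that row. The crucial logical observation is that $E_1 \cap E_2 \cap E_3 \subseteq E_4$: from $s(a) = s(b)$, $s(a) = s(b')$, and $s(a') = s(b)$, transitivity forces $s(a') = s(b')$. Taking complements and applying the union bound, for any distribution $d$ on $\Omega$,
\[ P_d(\Omega \setminus E_1) + P_d(\Omega \setminus E_2) + P_d(\Omega \setminus E_3) + P_d(E_4) \; \geq \; 1. \]

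It then remains to read off each term from the given table. Row $\{a,b\}$ is concentrated on the diagonal, yielding $P_d(\Omega \setminus E_1) = 0$; rows $\{a,b'\}$ and $\{a',b\}$ each assign mass $1/8 + 1/8 = 1/4$ to the off-diagonal, yielding $P_d(\Omega \setminus E_2) = P_d(\Omega \setminus E_3) = 1/4$; and row $\{a',b'\}$ assigns mass $1/8 + 1/8 = 1/4$ to the diagonal, yielding $P_d(E_4) = 1/4$. The left-hand side of the displayed inequality therefore equals $3/4 < 1$, delivering the desired contradiction.

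The only real obstacle is finding a suitable logical constraint on the events, which is exactly the step of choosing an appropriate Bell-type inequality; a CHSH-style computation using the correlators $\mathbb{E}_d[(-1)^{s(x)+s(y)}]$ would work equally well, but the ``equality'' form above has the merit of staying closest to the relational-database reading developed in the earlier sections, where events of interest are naturally expressed as logical constraints among tuple entries. Note also that the corresponding \emph{relational} gluing problem (the support table) does admit a solution, namely $R = \{s \mid s(a) = s(b)\}$, so the obstruction is genuinely quantitative and cannot be detected at the level of supports alone.
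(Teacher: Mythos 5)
Your argument is correct, and it reaches the contradiction by a genuinely different route from the paper's. The paper works at the level of the $16$ atoms of $\Omega$: it assigns a variable $X_i$ to each global assignment $s_i$, writes down the four linear equations coming from the entries $(a,b)\mapsto(0,0)$ with probability $1/2$ and three entries with probability $1/8$, and observes that summing the last three equations dominates the first term-by-term (every atom of the first event occurs in at least one of the other three), giving $1/2 \leq 3/8$ by non-negativity. You instead use the four coarser ``equality'' events, the logical entailment $E_1 \cap E_2 \cap E_3 \subseteq E_4$, and the union bound to obtain $P(\neg E_1)+P(\neg E_2)+P(\neg E_3)+P(E_4) \geq 1$, which the data violates as $3/4 < 1$; your marginal computations ($0, 1/4, 1/4, 1/4$) are all correct. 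Both proofs are instances of the same mechanism --- a propositional entailment among context-local events forces a linear inequality on any global distribution --- but they use different events and hence different inequalities: the paper's is an ad hoc choice of table entries tailored to this model, while yours is essentially the CHSH/logical Bell inequality, which is the more standard and more portable form (it is exactly the perspective of the logical-Bell-inequalities paper cited in the Further Directions section). What the paper's version buys is a completely elementary presentation needing no choice of ``clever'' events, and a template that transfers verbatim to the Boolean semiring for the Hardy argument that follows; what yours buys is conceptual clarity about \emph{which} facet of the local polytope is violated, and a tight quantitative statement (violation by $1/4$). Your closing remark that the support table nevertheless admits the universal relation $\{s \mid s(a)=s(b)\}$ is also correct and matches the paper's observation that the Bell model is not ``possibilistically'' non-local.
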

\begin{proof}
Assume for a contradiction that such a distribution $d$ exists.
It will assign a number $X_i \in [0, 1]$ to each $s_i :  \{ a, a', b, b' \} \rTo \{ 0, 1 \}$. There are 16 such functions: we enumerate them by viewing them as binary strings, where the $j$'th bit indicates the assignment of an outcome to the $j$'th measurement, listed as $a, a', b, b'$.

The requirement that this distribution projects onto the distributions in the empirical model translates into 16 equations, one for each entry in the table. It suffices to consider 4 of these equations:
\[
\begin{array}{lclclclcc}
X_1 & + & X_{2} & + & X_{3} & + & X_{4} & = & 1/2 \\
X_2 & + & X_{4} & + & X_{6} & + & X_{8} & = & 1/8 \\
X_3 & + & X_{4} & + & X_{11} & + & X_{12} & = & 1/8 \\
X_1 & + & X_{5} & + & X_{9} & + & X_{13} & = & 1/8 \\
\end{array}
\]
Adding the last three equations yields
\[ X_1 + X_2 + X_3 + 2X_4 + X_5 + X_6 + X_8 + X_9  + X_{11} + X_{12} + X_{13}  \; = \; 3/8 .\]
Since all these terms must be non-negative, the left-hand side of this equation must be greater than or equal to the left-hand side of the first equation, yielding the required contradiction.
\end{proof}

This argument seems very specific to the probabilistic nature of the empirical model.
However, an important theme in the work on no-go theorems in quantum mechanics is to prove results of this kind in a probability-free fashion \cite{greenberger1989going,hardy1992quantum}. This will bring us directly into the arena of relational databases.

\subsection{Hardy's construction}

Hardy's construction \cite{hardy1992quantum} yields a family of empirical models which can be realized in quantum mechanics in similar fashion to the Bell model. However, these families exhibit a stronger form of non-locality property, which does not depend on the probabilities, but only on the support.

We exhibit an example of a support table arising from Hardy's construction.
\begin{center}
\begin{tabular}{l|cccc} 
 &  $(0, 0)$ & $(1, 0)$ & $(0, 1)$ & $(1, 1)$ \\ \hline
$(a, b)$ &  $1$ &  $1$ &  $1$ &  $1$ \\
$(a', b)$ &   $0$ &  $1$ &  $1$ &  $1$ \\
$(a, b')$ &  $0$ &  $1$ &  $1$ & $1$ \\
$(a', b')$ &   $1$ &  $1$ &  $1$ & $0$ \\
\end{tabular}
\end{center}

This arises from a probability table by replacing all positive probabilites by $1$.

Note that we can view this table as encoding a small relational database, as in our discussion in the previous section. There will be four relation tables in this database, one for each of the above rows.
The table corresponding to the first row will have the full set of tuples over $\{ 0, 1 \}$. The tables for the second and third rows will have the form
\begin{center}
\begin{tabular}{| c | c |} \hline
$\alpha$ & $\beta$ \\ \hline \hline
 0 & 1 \\ \hline
 1 & 0 \\ \hline
 1 & 1 \\ \hline
\end{tabular}
\end{center}

while that for the fourth row will have the form
\begin{center}
\begin{tabular}{| c | c |} \hline
$a'$ & $b'$ \\ \hline \hline
 0 & 0 \\ \hline
 0 & 1 \\ \hline
 1 & 0 \\ \hline
\end{tabular}
\end{center}

The property which shows the non-locality of this model is the exact relational analogue of the probabilistic version we considered in relation to the Bell model.

\begin{proposition}
There is no $A$-relation $R$,  where $A = \{a, a', b, b'\}$, which is consistent with the empirical observable supports; that is, for which $R | \{ \alpha, \beta \}$ yields the relational table for all $\{ \alpha, \beta \}$, $\alpha \in \{ a, a'\}$, $\beta \in \{ b, b' \}$.
\end{proposition}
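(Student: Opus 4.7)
The plan is to run a direct logical argument by contradiction, exploiting the single tuple $(0,0)$ that appears in the $(a,b)$ context but is forbidden in the two mixed contexts, together with the tuple $(1,1)$ that is forbidden in $(a',b')$. In more detail, suppose for contradiction that $R \in \RS(A)$ exists with $R|_{\{\alpha,\beta\}}$ matching each observed support table. Since the $(a,b)$-table contains $(0,0)$, the assumption $R|_{\{a,b\}} = \{(0,0),(0,1),(1,0),(1,1)\}$ forces the existence of some $s \in R$ with $s(a) = s(b) = 0$; call this the offending global section.

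Next I would extract the forced values of $s(a')$ and $s(b')$ from the remaining two ``mixed'' rows. Because $(0,0)$ is absent from the $(a',b)$-support, the tuple $s|_{\{a',b\}} = (s(a'), 0)$ must differ from $(0,0)$, hence $s(a') = 1$. Symmetrically, $(0,0)$ is absent from the $(a,b')$-support, and $s|_{\{a,b'\}} = (0, s(b'))$, so $s(b') = 1$. Thus $s$ is completely determined: $s(a) = s(b) = 0$, $s(a') = s(b') = 1$.

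Finally, I would observe that the restriction $s|_{\{a',b'\}} = (1,1)$ lies in $R|_{\{a',b'\}}$, but the $(a',b')$-table explicitly forbids $(1,1)$ (its last entry is $0$). This contradicts the compatibility of $R$ with the fourth row, completing the argument.

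There is no real technical obstacle here; the whole content of the proposition is the small chain of logical implications above, which is precisely the ``probability-free'' shadow of Bell's argument. The only conceptual point worth flagging in the write-up is \emph{why} we are entitled to pick the offending section $s$ at the first step --- namely, that $R|_{\{a,b\}}$ is obtained as a direct image under restriction, so every tuple in the $(a,b)$-table, and in particular $(0,0)$, must be the projection of some global tuple in $R$. Once this is said, the remaining deductions are immediate from functoriality of restriction and inspection of the three other rows.
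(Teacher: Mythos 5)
Your proof is correct and is essentially the paper's argument in different notation: the paper encodes the four relevant rows as clauses over the sixteen Boolean variables indexed by global assignments and observes that every disjunct of the positive clause (the four assignments extending $(a,b)\mapsto(0,0)$) occurs as a negated conjunct in one of the other three clauses, which is exactly your chase through the forced values $s(a')=s(b')=1$ and the final clash with the missing $(1,1)$ entry of the $(a',b')$ row. Your explicit remark that $(0,0)$ must be witnessed by some global tuple of $R$ because projection is a direct image is the same entitlement the paper invokes when it writes down the first, disjunctive clause.
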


In database language, this says exactly that there is no `universal relation' on the whole set of attributes which yields each of the `observable relations' by projection.

\begin{proof}
We argue similarly to the case of the Bell model, except that we are now working over the Boolean semiring rather than the non-negative reals. The existence of a relation $R$ thus reduces to a Boolean satisfiability problem. An equation $\sum_i X_i = 1$ simply asserts the disjunction of the boolean variables, while an equation $\sum_i X_i = 0$ asserts the conjunction of the negated variables. Again it suffices to consider four of the equations which can be read off the Hardy table:
\[ \begin{array}{ccccccc}
X_1 & \OR & X_2 & \OR & X_3 & \OR & X_4 \\
 \neg X_1 & \AND & \neg X_3 & \AND & \neg X_ 5 & \AND & \neg X_7 \\
  \neg X_1 & \AND & \neg X_2 & \AND & \neg X_ {9} & \AND & \neg X_{10} \\
   \neg X_4 & \AND & \neg X_8 & \AND & \neg X_ {12} & \AND & \neg X_{16} 
\end{array}
\]
Since every disjunct in the first formula appears as a negated conjunct in one of the other three formulas, there is no satisfying assignment.
\end{proof}

There is a precise sense in which the Hardy result is stronger than the Bell result.
In fact, we have the following.

\begin{proposition}
If an empirical model has a local hidden-variable model in the probabilistic sense, then its support table has a universal relation. Thus failure to have a universal relation implies failure to have a local hidden-variable model in the probabilistic sense.
\end{proposition}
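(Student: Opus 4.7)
The plan is to take a local hidden-variable distribution $d$ and simply form its support $R := \supp(d)$; this $R$, being a subset of $\prod_{a \in A} D_a$ where $A$ is the full attribute set, is a candidate universal relation. The task then reduces to verifying that $R$ restricts to the support tables observed at each measurement context $A_i$.

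The key lemma I would prove first is that \emph{support commutes with marginalization}: for any distribution $d \in \DR(\TS(A))$ and any $B \subseteq A$, we have $\supp(d|_B) = \supp(d)|_B$, where on the left we use probabilistic marginalization (the functorial action of $\DR$ on the inclusion $B \hookrightarrow A$) and on the right we use relational projection. This is a direct computation from the defining formulas: $(d|_B)(t) = \sum_{s \,:\, s|_B = t} d(s)$, and since all summands lie in $\Rpos$, there is no possibility of cancellation, so the sum is strictly positive iff some individual $d(s)$ with $s|_B = t$ is strictly positive --- which says exactly that $t$ lies in the direct image of $\supp(d)$ under restriction. Conceptually this is the statement that the support map is a natural transformation $\supp : \DR \Rightarrow \Pow$ of functors on $\Set$, which is the same fact invoked informally earlier in the paper when passing from probabilistic to relational databases.

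With this lemma in hand, the proof is immediate. By hypothesis $d$ is a distribution on $\TS(A)$ whose marginal $d|_{A_i}$ equals the empirical distribution $d_{A_i}$ for every context $A_i$ in the schema. Applying $\supp$ to both sides and using the lemma,
\[ R|_{A_i} \;=\; \supp(d)|_{A_i} \;=\; \supp(d|_{A_i}) \;=\; \supp(d_{A_i}), \]
and $\supp(d_{A_i})$ is by definition the row of the support table at context $A_i$. Hence $R$ satisfies the gluing condition for the support instance, and by the earlier proposition identifying gluings with universal relations, $R$ is a universal relation for the support table. The contrapositive is the second sentence of the statement.

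The only step requiring genuine care is the support-commutes-with-marginalization lemma, and even that is essentially trivial \emph{because} we are over $\Rpos$ rather than a ring with subtraction; the positivity hypothesis is what makes the argument go through, and it is worth flagging this as the place where the non-negativity of probabilities is used. Everything else is bookkeeping within the presheaf/sheaf framework already set up in Section~\ref{sheafsec}.
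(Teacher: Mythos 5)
Your proposal is correct and follows essentially the same route as the paper: the paper's one-line proof invokes the fact that the map $\Rpos \rarr \Two$ sending non-zero elements to $1$ is a semiring homomorphism, and your ``support commutes with marginalization'' lemma (a sum of non-negative terms is positive iff some summand is, so $\supp$ is a natural transformation $\DR \Rightarrow \Pow$) is precisely the unpacking of what that homomorphism buys at the level of the functors $\DR$ and $\Pow$. You have simply made explicit the details the paper leaves implicit, including correctly identifying non-negativity (absence of cancellation) as the essential point.
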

\begin{proof}
This follows simply from the fact that the map  from the non-negative reals to the booleans which takes all non-zero elements to 1 is a semiring homomorphism.
\end{proof}

The converse to this result is false. For example, the support table arising from the Bell model \emph{does} have a universal relation, as can easily be verified.

Note that the Hardy table, and indeed all such tables arising from quantum mechanics, satisfies the compatibility condition which we discussed in Section~\ref{sheafsec}.
In fact, compatibility corresponds precisely to the physical condition of \emph{no-signalling}, and the fact that quantum models satisfy the condition is exactly the content of the No-Signalling theorem of quantum mechanics. See \cite{abramsky2011unified} Section~8 for an extended discussion of this point.

\section{No-Go theorems, global sections and universal relations}

We shall now develop a more general perspective on the results we have discussed in the previous section.

Following the geometric language we introduced in Section~\ref{sheafsec}, we see that the existence of a hidden-variable model is equivalently expressed as the existence of a \emph{global section} which glues together the family of empirical accessible distributions or relations.

Thus non-locality and related no-go results can be understood in terms of \emph{obstructions to the existence of global sections}, a central issue in the pervasive applications of sheaves in geometry, topology, analysis and number theory.

In terms of databases, such results can be understood as expressing obstructions to the existence of universal relations for given instances of the database.

We shall now discuss two further types of no-go results, which can be understood in terms of yet stronger forms of obstruction.

\subsection{Strong Contextuality}

If we consider the argument for the Hardy construction, it can be understood as saying that there is no relation over the global tuples which `covers' all (and only) the observable tuples.
But now suppose we consider a much weaker requirement: we simply ask for \emph{one global tuple} which projects consistently into all the relations in the database instance.

Note that the Hardy model does meet this condition. The global assignment
\[ \{ a \mapsto 1, \; a' \mapsto 0, \; b \mapsto 1, \; b' \mapsto 0 \} \]
does project consistently into the support table for this model. The Bell model similarly meets this condition.

If even this, much weaker requirement \emph{cannot} be met, then we have a much stronger form of no-go theorem. We say that such a situation exhibits \emph{strong contextuality}.

The question now arises: are there models coming from quantum mechanics which are strongly contextual in this sense?

We shall now show that the well-known GHZ models \cite{greenberger1989going}, of type $(n, 2, 2)$ for all $n > 2$, are strongly contextual. This will establish a strict hierarchy
\[ \mbox{Bell} < \mbox{Hardy} < \mbox{GHZ} \]
of increasing strengths of obstructions to non-contextual behaviour for these salient models.

The GHZ model of type $(n, 2, 2)$ can be specified as follows.
We label the two measurements at each part as $\mxi$ and $\myi$, and the outcomes as $0$ and $1$.
For each context $C$, every $s$ in the support of the model satisfies the following conditions:
\begin{itemize}
\item If the number of $Y$ measurements in $C$ is a multiple of 4, the number of $1$'s in the outcomes specified by $s$ is even.
\item If the number of $Y$ measurements  is $4k+2$, the number of $1$'s in the outcomes is odd.\end{itemize}
A model with these properties can be realized in quantum mechanics, using the GHZ state
\[ \frac{\ket{0 \cdots 0} \; + \; \ket{1 \cdots 1}}{\sqrt{2}} . \]

\begin{proposition}
The GHZ models are strongly contextual, for all $n \geq 3$.
\end{proposition}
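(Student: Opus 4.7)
The plan is to argue by contradiction, using only the parity constraints from contexts containing zero or two $Y$-measurements. Suppose there is a global assignment $s$ defined on $\{\mxi, \myi : 1 \leq i \leq n\}$ with values in $\{0,1\}$, whose restriction to every measurement context lies in the stated support. Write $x_i := s(\mxi)$ and $y_i := s(\myi)$, with all arithmetic below understood modulo $2$.

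First I would extract two families of parity constraints. The all-$X$ context has zero $Y$-measurements, and since $0 \equiv 0 \,(\bmod\, 4)$ the GHZ condition forces $\sum_{i=1}^n x_i \equiv 0$. For each pair $i \neq j$, the context that measures $Y$ at positions $i, j$ and $X$ elsewhere has exactly two $Y$'s, and since $2 \equiv 2 \,(\bmod\, 4)$ the GHZ condition forces $y_i + y_j + \sum_{k \notin \{i,j\}} x_k \equiv 1$. Adding these two equations modulo $2$ cancels the spectator $x_k$ terms and yields
\[ (x_i + y_i) + (x_j + y_j) \equiv 1 \qquad \mbox{for all } i \neq j. \]

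Setting $z_i := x_i + y_i \in \{0,1\}$, this displayed equation says $z_i \neq z_j$ for all distinct $i, j \in \{1, \ldots, n\}$, i.e.\ the values $z_1, \ldots, z_n$ are pairwise distinct. Since $n \geq 3$, three such values cannot be pairwise distinct in a two-element set, giving the contradiction.

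The main obstacle is the initial choice of contexts: it is precisely the pairing of the single all-$X$ context with the family of two-$Y$ contexts that makes the spectator $x_k$ terms cancel uniformly, after which the remainder of the argument is pigeonhole. Note that the argument never invokes contexts with odd or $\geq 4$ numbers of $Y$-measurements---where the GHZ support either imposes no parity constraint or imposes one we do not need---so the proof is completely uniform in $n \geq 3$.
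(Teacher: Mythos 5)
Your proof is correct, and it takes a genuinely cleaner route than the paper's. Both arguments ultimately exploit the same relation $(x_i+y_i)+(x_j+y_j)\equiv 1 \pmod 2$ for all $i\neq j$ (in the paper this appears as ``equal $Y$-values force unequal $X$-values, and vice versa''), but the paper derives it by starting from the all-$Y$ context and replacing two $Y$'s by $X$'s. That choice of base context forces a case split on $n \bmod 4$ (the parity constraint on the all-$Y$ context depends on the residue of $n$), requires a separate starting configuration for odd $n$, and breaks down entirely for $n=3$, where the paper has to fall back on Mermin's instruction-set argument. By instead pairing the all-$X$ context (zero $Y$'s, always a multiple of $4$) with the two-$Y$ contexts---both of which exist and carry the stated parity constraints for every $n\geq 2$---you get the key relation uniformly, and the pigeonhole on $z_i := x_i + y_i$ then finishes all cases $n \geq 3$ at once, including $n=3$. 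Your final pigeonhole is also a mild streamlining of the paper's concluding case analysis (``not all $\myi$ equal'' versus ``all $\myi$ equal''), which is just the same pigeonhole applied separately to the $y$'s and the $x$'s. It is worth noting, as you implicitly do and as the failure of the argument at $n=2$ confirms, that $n\geq 3$ is exactly where three pairwise-distinct values in $\{0,1\}$ become impossible; this matches the fact that the $n=2$ case (the Bell state) is not strongly contextual.
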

\begin{proof}
We consider the case where $n = 4k$, $k \geq 1$.
Assume for a contradiction that we have a global section $s \in S_e$ for the GHZ model $e$.

If we take $Y$ measurements at every part, the number of $1$ outcomes under the assignment is even.
 Replacing any two $Y$'s by $X$'s changes the residue class mod $4$ of the number of $Y$'s, and hence must result in the opposite parity for the number of $1$ outcomes under the assignment.
Thus for any $\myi$, $\myj$ assigned the \textit{same} value, if we substitute $X$'s in those positions they must receive \textit{different} values under $s$. Similarly, for any  $\myi$, $\myj$ assigned different values, the corresponding  $\mxi$, $\mxj$ must receive the same value.

Suppose firstly that not all $\myi$ are assigned the same value by $s$.
Then for some $i$, $j$, $k$, $\myi$ is assigned the same value as $\myj$, and $\myj$ is assigned a different value to $\myk$. Thus $\myi$ is also assigned a different value to $\myk$. Then $\mxi$ is assigned the same value as $\mxk$, and $\mxj$ is assigned the same value as $\mxk$. By transitivity, $\mxi$ is assigned the same value as $\mxj$, yielding a contradiction.

The remaining cases are where all $Y$'s receive the same value. Then any pair of $X$'s must receive different values. But taking any 3 $X$'s, this yields a contradiction, since there are only two values, so some pair must receive the same value.

The case when $n = 4k+2$, $k \geq 1$,  is proved in the same fashion, interchanging the parities.
When $n \geq 5$ is odd, we start with a context containing one $X$, and again proceed similarly.

The most familiar case, for $n=3$, does not admit this argument, which relies on having at least 4 $Y$'s in the initial configuration.  However, for this case one can easily adapt the well-known argument of Mermin using `instruction sets' \cite{mermin1990quantum} to prove strong contextuality. This uses a case analysis to show that there are 8 possible global sections satisfying the parity constraint on the 3 measurement combinations with 2 $Y$'s and 1 $X$; and all of these violate the constraint for the $XXX$ measurement.
\end{proof}


\subsection{The Kochen-Specker theorem}

Kochen-Specker-type theorems \cite{kochen1975problem} can be understood as \emph{generic strong contextuality results}. In database terms, they say that, if the database schema has a certain combinatorial structure, then \emph{every instance} satisfying some conditions is strongly contextual.
This can be interpreted in the quantum context in such a way that the conditions will be satisfied by \emph{every} quantum state, and hence we obtain a state-independent form of strong contextuality result.

The condition which is typically imposed on the instances, assuming that the possible data values for each attribute lie in $\{ 0, 1 \}$, is that \emph{every tuple contains exactly one $1$}. If we think in terms of satisfiability, this corresponds to a `POSITIVE ONE-IN-$k$-SAT' condition.

To show that the Kochen-Specker result holds is exactly to show that there is no satisfying assignment for the corresponding set of clauses.

The simplest example of this situation is the `triangle', \ie the schema with  elements
\[  \{a, b\}, \{ b, c \}, \{ a, c \}  . \]
However, this example cannot be realized in quantum mechanics \cite{abramsky2011unified}.

An example which can be realized in quantum mechanics, where $\Att$ has 18 elements, and there are 9 sets in the database schema,  each with four elements, such that each element of $\Att$ is in two of these, appears in the 18-vector proof of the Kochen-Specker Theorem in \cite{cabello1996bell}.
\begin{center}
\begin{tabular}{|c|c|c|c|c|c|c|c|c|} \hline
$U_1$ & $U_2$ &  $U_3$ & $U_4$ & $U_5$ & $U_6$ & $U_7$ & $U_8$ & $U_9$ \\ \hline\hline
$A$ & $A$ & $H$ & $H$ & $B$ & $I$ & $P$ & $P$  & $Q$ \\ \hline
$B$ & $E$ & $I$ & $K$ & $E$ & $K$ & $Q$ & $R$ & $R$  \\ \hline
$C$ & $F$ & $C$ & $G$ & $M$ &  $N$ & $D$ & $F$ & $M$  \\ \hline
$D$ & $G$ & $J$ & $L$ & $N$  & $O$ & $J$ & $L$ & $O$  \\ \hline
\end{tabular}
\end{center}

Here the schema is $\Sigma = \{ U_1, \ldots , U_9 \}$.

We shall  give a simple combinatorial condition on the schema $\Sigma$ which is implied by the existence of a global section $s$ satisfying  the `POSITIVE ONE-IN-$k$-SAT' condition. Violation of this condition therefore suffices to prove that no such global section exists.

For each $a \in \Att$, we define
\[ \Sigma(a) := \{ A \in \Sigma \mid a \in A \} . \]
\begin{proposition}
If a global section satisfying the condition exists, then every common divisor of $\{ \card{\Sigma(a)} \mid a \in \Att \}$ must divide $\card{\Sigma}$.
\end{proposition}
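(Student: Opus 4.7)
The plan is to prove this by a standard double-counting argument applied to incidences between the global section and the schema. Specifically, let $s : \Att \rTo \{0, 1\}$ be a global section satisfying the POSITIVE ONE-IN-$k$-SAT condition, meaning that for every $A \in \Sigma$ exactly one $a \in A$ has $s(a) = 1$. I would count the set
\[ P \; := \; \{ (A, a) \mid A \in \Sigma, \, a \in A, \, s(a) = 1 \} \]
in two ways.

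First, by grouping pairs according to their first coordinate: for each $A \in \Sigma$, the condition on $s$ guarantees exactly one $a \in A$ with $s(a) = 1$, so $A$ contributes exactly one pair, giving $\card{P} = \card{\Sigma}$. Second, by grouping according to the second coordinate: for each $a \in \Att$ with $s(a) = 1$, the pairs $(A, a) \in P$ are precisely those with $A \in \Sigma(a)$, so $\card{P} = \sum_{a : s(a) = 1} \card{\Sigma(a)}$. Equating the two counts yields
\[ \card{\Sigma} \; = \; \sum_{a \in \Att,\, s(a) = 1} \card{\Sigma(a)} . \]

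From here the divisibility claim is immediate: if $d$ is a common divisor of $\{ \card{\Sigma(a)} \mid a \in \Att \}$, then $d$ divides each summand on the right-hand side, and hence divides the sum, which is $\card{\Sigma}$. There is no real obstacle in this proof; the only subtlety is recognising that the POSITIVE ONE-IN-$k$-SAT form of the condition is exactly what is needed to make the first count come out to $\card{\Sigma}$ rather than an inequality. As a sanity check, in the 18-vector example one has $\card{\Sigma} = 9$ and every attribute lies in exactly two sets, so $\card{\Sigma(a)} = 2$ for all $a$; since $2 \nmid 9$, the proposition gives an immediate combinatorial obstruction to the existence of a satisfying global section, recovering the Kochen--Specker result for this schema.
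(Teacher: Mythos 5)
Your proof is correct and is essentially the same as the paper's: the double count of the incidence set $P$ is just another way of expressing the paper's observation that the fibres $\Sigma(a)$, for $a$ with $s(a)=1$, partition $\Sigma$, giving the identity $\card{\Sigma} = \sum_{a \in X}\card{\Sigma(a)}$ from which divisibility follows. Nothing further is needed.
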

\begin{proof}
Suppose there is a global section $s : \Att \rarr \{ 0, 1 \}$ satisfying the condition. Consider the set $X\subseteq \Att$ of those $a$ such that $s(a) = 1$. Exactly one element of $X$ must occur in every $A \in \Sigma$. Hence there is a partition of $\Sigma$ into the subsets $\Sigma(a)$ indexed by the elements of $X$. Thus
\[ \card{\Sigma} = \sum_{a \in X} \card{\Sigma(a)} . \]
It follows that, if there is a common divisor of the numbers $\card{\Sigma(a)}$,  it must divide $\card{\Sigma}$.
\end{proof}

For example, if every $a \in \Att$ appears in an even number of  elements of $\Sigma$, while $\Sigma$ has an odd number of elements, then there is no global section.
This corresponds to the `parity proofs' which are often used in verifying Kochen-Specker-type results \cite{cabello1996bell,WaegellAravind2011}. For example, in the 18-attribute schema with 9 relations given above, each attribute appears in two relations in the schema; hence the argument applies.

For further discussion of these ideas, including connections with graph theory, see \cite{abramsky2011unified}.

\section{Further Directions}

We mention some further directions for developing the connections between databases and the study of non-locality and contextuality in quantum mechanics.

\begin{itemize}
\item We may consider conditions on the database schema which guarantees that global sections can be found. The important notion of \emph{acyclicity} in database theory \cite{beeri1983desirability} is relevant here.
On the probabilistic side there is a result by Vorob'ev \cite{vorob1962consistent} (motivated by game theory), which gives necessary and sufficient combinatorial conditions on a schema for \emph{any} assignment of probability distributions on the tuples for each relation in the schema to have a global section; that is, for a universal relation in the probabilistic sense to always exist for any probabilistic instance of the database.
Rui Soares Barbosa (personal communication) has shown that the Vorob'ev condition is equivalent to acyclicity in the database sense.
This provides another striking connection between database theory and the theory of quantum non-locality and contextuality.

\item A logical approach to Bell inequalities in terms of logical consistency conditions is developed in \cite{abramsky2012logical}. It would be interesting to interpret and apply this notion of Bell inequalities in the database context.

\item The tools of sheaf cohomology are used to characterize the obstructions to global sections in a large family of cases in \cite{abramsky2011cohomology}. In principle, these sophisticated tools can be applied to databases. There may be interesting connections with acyclicity in the database sense.
\end{itemize}

We can summarise the connections which we have exposed between database theory and quantum non-locality and contextually in the following table:

\begin{center}
\begin{tabular}{l|l} 
Relational databases & measurement scenarios \\ \hline
attribute & measurement \\
set of attributes defining a relation table & compatible set of measurements \\
database schema & measurement cover \\
tuple & local section (joint outcome) \\
relation/set of tuples & boolean distribution on joint outcomes \\
universal relation instance & global section/hidden variable model \\
acyclicity & Vorob'ev condition
\end{tabular}
\end{center}

\paragraph{Acknowledgements}

Discussions with and detailed comments by Phokion Kolaitis are gratefully acknowledged.
Leonid Libkin also gave valuable feedback.
This paper was written while in attendance at the program on `Semantics and Syntax: the legacy of Alan Turing' at the Isaac Newton Institute, Cambridge, April--May 2012.

\bibliographystyle{plain}

\bibliography{bdbib}

\begin{thebibliography}{10}

\bibitem{ABV}
Serge Abiteboul, Richard Hull, and Victor Vianu.
\newblock {\em {Foundations of Databases}}.
\newblock Addison-Wesley Reading, 1995.

\bibitem{abramsky2010relational}
S.~Abramsky.
\newblock {Relational Hidden Variables and Non-Locality}.
\newblock {\em Studia Logica}, 101(2):411--452, 2013.

\bibitem{abramsky2011unified}
S.~Abramsky and A.~Brandenburger.
\newblock The sheaf-theoretic structure of non-locality and contextuality.
\newblock {\em New Journal of Physics}, 13(2011):113036, 2011.

\bibitem{AGK}
S.~Abramsky, G.~Gottlob, and P.~Kolaitis.
\newblock Robust constraint satisfaction and local hidden variables in quantum
  mechanics.
\newblock In F.~Rossi, editor, {\em {Proceedings of the International Joint
  Conference in Artificial Intelligence (IJCAI)}}, 2013.

\bibitem{abramsky2012logical}
S.~Abramsky and L.~Hardy.
\newblock {Logical Bell Inequalities}.
\newblock {\em Physical Review A}, 85:062114, 2012.

\bibitem{abramsky2011cohomology}
S.~Abramsky, S.~Mansfield, and R.S. Barbosa.
\newblock The cohomology of non-locality and contextuality.
\newblock In {\em Proceedings of Quantum Physics and Logic 2011}, volume~95,
  pages 1--15. EPTCS, 2012.

\bibitem{beeri1983desirability}
C.~Beeri, R.~Fagin, D.~Maier, and M.~Yannakakis.
\newblock On the desirability of acyclic database schemes.
\newblock {\em Journal of the ACM (JACM)}, 30(3):479--513, 1983.

\bibitem{bell1964einstein}
J.S. Bell.
\newblock {On the {E}instein-{P}odolsky-{R}osen paradox}.
\newblock {\em Physics}, 1(3):195--200, 1964.

\bibitem{buneman2007provenance}
P.~Buneman and W.C. Tan.
\newblock Provenance in databases.
\newblock In {\em Proceedings of the 2007 ACM SIGMOD international conference
  on Management of data}, pages 1171--1173. ACM, 2007.

\bibitem{cabello1996bell}
A.~Cabello, J.M. Estebaranz, and G.~Garc{\'\i}a-Alcaine.
\newblock {Bell-Kochen-Specker theorem: A proof with 18 vectors}.
\newblock {\em Physics Letters A}, 212(4):183--187, 1996.

\bibitem{DBLP:journals/ftdb/CheneyCT09}
James Cheney, Laura Chiticariu, and Wang~Chiew Tan.
\newblock Provenance in databases: Why, how, and where.
\newblock {\em Foundations and Trends in Databases}, 1(4):379--474, 2009.

\bibitem{fagin1982simplied}
R.~Fagin, A.O. Mendelzon, and J.D. Ullman.
\newblock A simplified universal relation assumption and its properties.
\newblock {\em ACM Transactions on Database Systems (TODS)}, 7(3):343--360,
  1982.

\bibitem{golan1999semirings}
J.S. Golan.
\newblock {\em Semirings and their Applications}.
\newblock Springer, 1999.

\bibitem{green2007provenance}
T.J. Green, G.~Karvounarakis, and V.~Tannen.
\newblock Provenance semirings.
\newblock In {\em Proceedings of the twenty-sixth ACM SIGMOD-SIGACT-SIGART
  symposium on Principles of database systems}, pages 31--40. ACM, 2007.

\bibitem{greenberger1989going}
D.M. Greenberger, M.A. Horne, and A.~Zeilinger.
\newblock {Going beyond Bell's theorem}.
\newblock In M.~Kafatos, editor, {\em Bell's Theorem, Quantum Theory, and
  Conceptions of the Universe}, pages 69--72. Kluwer, 1989.

\bibitem{hardy1992quantum}
L.~Hardy.
\newblock {Quantum mechanics, local realistic theories, and Lorentz-invariant
  realistic theories}.
\newblock {\em Physical Review Letters}, 68(20):2981--2984, 1992.

\bibitem{honeyman1980testing}
P.~Honeyman, R.E. Ladner, and M.~Yannakakis.
\newblock Testing the universal instance assumption.
\newblock {\em Information Processing Letters}, 10(1):14--19, 1980.

\bibitem{kochen1975problem}
S.~Kochen and E.P. Specker.
\newblock {The problem of hidden variables in quantum mechanics}.
\newblock {\em Journal of Mathematics and Mechanics}, 17(1):59--87, 1967.

\bibitem{korth1984system}
H.F. Korth, G.M. Kuper, J.~Feigenbaum, A.~Van~Gelder, and J.D. Ullman.
\newblock {SYSTEM/U: a database system based on the universal relation
  assumption}.
\newblock {\em ACM Transactions on Database Systems (TODS)}, 9(3):331--347,
  1984.

\bibitem{mac1998categories}
S.~Mac~Lane.
\newblock {\em Categories for the working mathematician}, volume~5.
\newblock {Springer Verlag}, 1998.

\bibitem{maier1983maximal}
D.~Maier and J.D. Ullman.
\newblock Maximal objects and the semantics of universal relation databases.
\newblock {\em ACM Transactions on Database Systems (TODS)}, 8(1):1--14, 1983.

\bibitem{maier1984foundations}
D.~Maier, J.D. Ullman, and M.Y. Vardi.
\newblock On the foundations of the universal relation model.
\newblock {\em ACM Transactions on Database Systems (TODS)}, 9(2):283--308,
  1984.

\bibitem{mermin1990quantum}
N.D. Mermin.
\newblock {Quantum mysteries revisited}.
\newblock {\em Am. J. Phys}, 58(8):731--734, 1990.

\bibitem{nielsen2000quantum}
M.Q.C. Nielsen and I.~Chuang.
\newblock {\em {Quantum Computation and Quantum Information}}.
\newblock Cambridge University Press, 2000.

\bibitem{pierce1991basic}
B.C. Pierce.
\newblock {\em {Basic category theory for computer scientists}}.
\newblock The MIT Press, 1991.

\bibitem{ullman1983principles}
J.D. Ullman.
\newblock {\em Principles of database systems}.
\newblock Prentice Hall, 1983.

\bibitem{vorob1962consistent}
N.N. Vorob'ev.
\newblock Consistent families of measures and their extensions.
\newblock {\em Theory of Probability and its Applications}, 7:147, 1962.

\bibitem{WaegellAravind2011}
M.~Waegell and P.K. Aravind.
\newblock {Parity proofs of the Kochen-Specker theorem based on the 24 rays of
  Peres}.
\newblock {\em Arxiv preprint arXiv:1103.6058v1}, 2011.

\bibitem{yanofsky2008quantum}
Noson~S Yanofsky and Mirco~A Mannucci.
\newblock {\em Quantum computing for computer scientists}, volume~20.
\newblock Cambridge University Press Cambridge, 2008.

\end{thebibliography}
\end{document}